\documentclass[journal,letterpaper,twocolumn]{IEEEtran}

\linespread{1}
\usepackage{amsfonts}
\usepackage{amssymb}
\usepackage{cite}
\usepackage[cmex10]{amsmath}
\usepackage{float}
\usepackage{color}
\usepackage{stfloats}
\usepackage{mathrsfs}
\usepackage{algorithm}
\usepackage{algorithmic}
\usepackage{amsthm}
\usepackage{multirow}
\usepackage{changepage}
\usepackage[normalem]{ulem}
\usepackage{geometry}
\usepackage{caption}
\usepackage{booktabs}
\usepackage{multirow}
\usepackage{ragged2e}
\justifying
\geometry{left = 1.72974cm,right = 1.69164cm, top = 1.7526cm, bottom = 2.667cm}

\newtheorem{lemma}{Lemma}

\IEEEoverridecommandlockouts

\ifCLASSINFOpdf
\usepackage[pdftex]{graphicx}
\DeclareGraphicsExtensions{.pdf,.jpeg,.png}
\else
\usepackage[dvips]{graphicx}
\DeclareGraphicsExtensions{.eps}
\fi
\usepackage{subfigure}
\usepackage{fancybox,dashbox}
\hyphenation{}

\begin{document}
\title{Antenna Selection in MIMO Cognitive Radio-Inspired NOMA Systems}

\author{Yuehua~Yu,~He~Chen,~Yonghui~Li,~Zhiguo~Ding, and Li~Zhuo
\thanks{Y.~Yu, H.~Chen, and Y.~Li are with the School of Electrical and Information Engineering, University of Sydney, NSW 2006, Australia (email: \{yuehua.yu, he.chen, yonghui.li\}@sydney.edu.au). Z. Ding is with the School of Computing and Communications, Lancaster University, Lancaster LA1 4YW, U.K. (email: z.ding@lancaster.ac.uk). L. Zhuo is with the Signal and Information Processing Laboratory, Beijing University of Technology, Beijing, China. (email: zhuoli@bjut.edu.cn).}
}

\maketitle

\begin{abstract}
This letter investigates a joint antenna selection (AS) problem for a MIMO cognitive radio-inspired non-orthogonal multiple access (CR-NOMA) network. In particular, a new computationally efficient joint AS algorithm, namely subset-based joint AS (SJ-AS), is proposed to maximize the signal-to-noise ratio of the secondary user under the condition that the quality of service (QoS) of the primary user is satisfied. The asymptotic closed-form expression of the outage performance for SJ-AS is derived, and the minimal outage probability achieved by SJ-AS among all possible joint AS schemes is proved. The provided numerical results demonstrate the superior performance of the proposed scheme.
\end{abstract}
\begin{IEEEkeywords}
Non-orthogonal multiple access (NOMA), cognitive radio (CR), antenna selection
\end{IEEEkeywords}
\IEEEpeerreviewmaketitle

\section{Introduction}
Non-orthogonal multiple access (NOMA) and cognitive radio (CR) have emerged as efficient techniques to improve the spectral efficiency \cite{Ref_noma_5G,Ref_cognitive_radio}. By naturally combining the concepts of both NOMA and CR, a cognitive radio-inspired NOMA (CR-NOMA) scheme was proposed and studied in \cite{Ref_CR_NOMA}. In CR-NOMA, the unlicensed secondary users (SU) is opportunistically served under the condition that the quality of service (QoS) of the licensed primary users (PU) is satisfied. As a result, the transmit power allocated to the SU is constrained by the instantaneous signal-to-interference-plus-noise ratio (SINR) of the PU. Compared to the conventional CR systems, higher spectral efficiency can be achieved by CR-NOMA because both the PU and SU can be served simultaneously using the same spectrum.

Recently, multiple-input multiple-output (MIMO) techniques have been considered in CR-NOMA systems to exploit the spatial degrees of freedom \cite{Ref_CR_NOMA_MIMO}. To avoid high hardware costs and computational burden while preserving the diversity and throughput benefits from MIMO, the antenna selection (AS) problem for MIMO CR-NOMA systems has been investigated in \cite{Ref_ME2}, wherein the SU was assumed to be rate adaptive and the design criterion was to maximize the SU's rate subject to the QoS of PU. On the other hand, the outage probability has also been commonly used to quantify the performance of AS for an alternative scenario, wherein users have fixed transmission rates \cite{Ref_outage2}. To the best of the authors' knowledge, the outage-oriented AS schemes for CR-NOMA systems have not been studied in~open~literature.

Motivated by this, the design and analysis of the outage-oriented joint AS algorithm for MIMO CR-NOMA networks is studied in this letter, which is fundamentally different from that for orthogonal multiple access (OMA) networks. This is because there is severe inter-user interference in NOMA scenarios, wherein the signals are transmitted in an interference-free manner in OMA scenarios. Moreover, the transmit power allocated to the SU in CR-NOMA scenarios is constrained by the instantaneous SINR of the PU, which is affected by the antenna selection result. In this case, the joint antenna selection for NOMA networks is coupled with the power allocation design at the BS, which makes the design and analysis of the joint AS problem for CR-NOMA networks more challenging. In this letter, we propose a new low-complexity joint AS scheme, namely subset-based joint AS (SJ-AS), to maximize the signal-to-noise ratio (SNR) of the SU under the condition that the QoS of the PU is satisfied. The asymptotic closed-form expression of the outage performance for SJ-AS is derived, and the minimal outage probability achieved by SJ-AS among all possible joint AS schemes is proved. Numerical results demonstrate the superior performance of the proposed~scheme.\vspace{-0.5em}

\section{System Model and Proposed Joint AS Scheme}
Consider a MIMO CR-NOMA downlink scenario as \cite{Ref_CR_NOMA_MIMO},  wherein two users including one PU and one SU are paired in one group to perform NOMA. We consider that BS, PU and SU are equipped with $N$, $M$ and $K$ antennas, respectively. We assume that the channels between the BS and users undergo spatially independent flat Rayleigh fading, then the entries of the channel matrix, e.g., ${\tilde{h}_{nm}}$ (${\tilde{g}_{nk}}$), can be modelled as independent and identically distributed complex Gaussian random variables, where $\tilde{h}_{nm}$ ($\tilde{g}_{nk}$) represents the channel coefficient between the $n$th antenna of the BS and the $m$th ($k$th) antenna of the PU (SU). For notation simplicity, we define ${h}_{nm}=|\tilde{h}_{nm}|^2$ and ${g}_{nk}=|\tilde{g}_{nk}|^2$.

As in \cite{Ref_one_RF1}, we consider that the BS selects one (e.g., $n$th) out of its $N$ antennas to transmit information, while the users select one (e.g., $m$th and $k$th) out of $M$ and $K$ available antennas respectively to receive massages. In this sense, only one RF chain is needed at each node to reduce the hardware cost, power consumption and complexity, and only the partial channel state information, i.e., the channel amplitudes, is needed at the BS, which is assumed perfectly known at the BS through the control~signalling.

According to the principle of NOMA, the BS broadcasts the superimposed signals {\small$\sqrt{a}s_p + \sqrt{b}s_s$}, where $s_p$ ($s_s$) denotes the signal to the PU (SU) with {\small$\mathbb{E}[{\left| {{s_p}} \right|^2}]=\mathbb{E}[{\left| {{s_s}} \right|^2}]=1$}, and $a$ and $b$ are the power allocation coefficients satisfying $a+b=1$. Then the received signals at the PU and SU are given~by
\begin{small}\begin{eqnarray}\label{Equ_part1_y}
{y_p}&=&\sqrt{P}\tilde{h}_{nm}\left(\sqrt{a}s_p + \sqrt{b}s_s\right)+n_p,\\
{y_s}&=&\sqrt{P}\tilde{g}_{nk}\left(\sqrt{a}s_p + \sqrt{b}s_s\right)+n_s,
\end{eqnarray}\end{small}where $P$ is the transmit power at the BS, and $n_p$ ($n_s$) is the complex additive white Gaussian noise with variance {\small$\sigma_p^2$} ({\small$\sigma_s^2$}). For simplicity, we assume {\small$\sigma_p^2 = \sigma_s^2=\sigma^2$}.

Following the principle of CR-NOMA, $s_p$ is decoded by treating $s_s$ as noise at both users, and $s_s$ may be recovered at the SU when $s_p$ has been successfully subtracted in the SIC procedure. By denoting the transmit SNR as $\rho=P/\sigma^2$, the received SINR of decoding $s_p$ at the PU is given by
\begin{small}\begin{eqnarray}\label{Equ_Rp1}
\gamma_{p}\!\!&=&\!\!{ah_{nm}}/\left(bh_{nm}+1/\rho\right).
\end{eqnarray}\end{small}Similarly, the received SINR to detect $s_p$ at the SU is given~by
\begin{small}\begin{eqnarray}\label{Equ_Rp2}
\gamma_{s\rightarrow p}\!\!&=&\!\!{ag_{nk}}/\left(bg_{nk}+1/\rho\right).
\end{eqnarray}\end{small}When $s_p$ is successfully removed, the SNR to detect $s_s$ at the SU is given by\vspace{-0.5em}
\begin{small}\begin{eqnarray}\label{Equ_Rs}
\gamma_s&=&bg_{nk}\rho.
\end{eqnarray}\end{small}\vspace{-1.5em}

Let {\small${{\gamma}}^{th}_p$} ({\small${\gamma}^{th}_s$}) denotes the predetermined detecting threshold of $s_p$ ($s_s$). As the SU is served on the condition that {\small${\gamma}^{th}_p$} is met, mathematically, $\gamma_{p}$ and $\gamma_{s\rightarrow p}$ should satisfy the following constraint simultaneously: {\small$\min\left(\gamma_{p},\gamma_{s\rightarrow p}\right)\geqslant \gamma^{th}_p$.}\vspace{-1em}

\subsection{The Formulation of Joint AS Optimization Problem}
In order to maximize the received SNR of the SU, we would like to solve the following optimization~problem:
\begin{small}\begin{subequations}\label{Equ_problem}
\begin{align}
      \mathbf{P}:~\{b^*, n^*,m^*,k^*\}=&\mathop {\arg \max }\limits_{b,n \in {\mathcal{N}},m \in {\mathcal{M}},k \in {\mathcal{K}}} {\gamma_s}\left(b,{g}_{nk}\right),\\
      \label{Equ_problem_c1}
      &\mathrm{s.t.}~\min\left(\gamma_{p},\gamma_{s\rightarrow p}\right)\geqslant\gamma^{th}_p,\\
      \label{Equ_problem_c2}
      &~~~~~0\leq b<1.
      \end{align}
\end{subequations}\end{small}where {\small$\mathcal{N}\!\!=\!\!\{1,\cdots,N\}, \mathcal{M}\!\!=\!\!\{1,\cdots,M\}, \small\mathrm{and}~\mathcal{K}\!\!=\!\!\{1,\cdots,K\}$}, and $\mathbf{P}$ is the joint optimization problem of antenna selection and power allocation. Specifically, similar to \cite{Ref_power_allocation}, given the antenna indexes $n$, $m$ and $k$, the optimal power allocation strategy $b$ can be obtained based on Lemma 1.
\begin{lemma}\label{Lemma:optimal_b}
Given the antenna indexes $n$, $m$ and $k$, the optimal power allocation strategy $b$ is given~by
\begin{small}\begin{eqnarray}\label{Equ_b2}
b=\max\left(\left(\beta\rho-{\gamma}^{th}_p\right)/\left(\left({\gamma}^{th}_p+1\right)\beta\rho\right),0\right),
\end{eqnarray}\end{small}where $\beta=\min\left(h_{nm},~g_{nk}\right)$.
\end{lemma}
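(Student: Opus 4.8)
The plan is to exploit the monotonicity of the objective and then collapse the two-sided QoS constraint into a single upper bound on $b$. First I would observe that, with the antenna indexes $n$, $m$, $k$ held fixed, the objective $\gamma_s=bg_{nk}\rho$ in \eqref{Equ_Rs} is strictly increasing in $b$, since $g_{nk}$ and $\rho$ are positive constants. Consequently the optimal $b^*$ is nothing but the largest feasible value of $b$, and the entire lemma reduces to locating the upper edge of the feasible interval carved out by \eqref{Equ_problem_c1} and \eqref{Equ_problem_c2}.

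Next I would translate the QoS constraint $\min(\gamma_p,\gamma_{s\rightarrow p})\geqslant\gamma^{th}_p$ into explicit bounds on $b$. Substituting $a=1-b$ into \eqref{Equ_Rp1} and rearranging $\gamma_p\geqslant\gamma^{th}_p$ (the denominator $bh_{nm}+1/\rho$ being positive) yields the linear inequality $b\leq(h_{nm}\rho-\gamma^{th}_p)/((\gamma^{th}_p+1)h_{nm}\rho)$, while the identical manipulation applied to \eqref{Equ_Rp2} gives $b\leq(g_{nk}\rho-\gamma^{th}_p)/((\gamma^{th}_p+1)g_{nk}\rho)$. Both bounds share the common form $f(x)=\frac{x\rho-\gamma^{th}_p}{(\gamma^{th}_p+1)x\rho}=\frac{1}{\gamma^{th}_p+1}\bigl(1-\frac{\gamma^{th}_p}{x\rho}\bigr)$, evaluated at $x=h_{nm}$ and $x=g_{nk}$, respectively.

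The key step, which I expect to be the crux of the argument, is to show that the two SINR constraints can be replaced by the single one associated with $\beta=\min(h_{nm},g_{nk})$. This follows because $f$ is monotonically increasing in $x$: the representation above makes this immediate, as $\gamma^{th}_p/(x\rho)$ is decreasing in $x$. Hence the tighter (smaller) of the two upper bounds is the one evaluated at the smaller channel gain, so the pair of QoS inequalities collapses to $b\leq f(\beta)$. I would then verify that $f(\beta)<1$ holds unconditionally — this is equivalent to $-1<\beta\rho$, which is trivially true — so that the box constraint $b<1$ in \eqref{Equ_problem_c2} is never active.

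Finally, intersecting $b\leq f(\beta)$ with the remaining requirement $b\geq0$ gives a feasible interval whose largest element is $\max(f(\beta),0)$, which is precisely \eqref{Equ_b2}. When $\beta\rho\geq\gamma^{th}_p$ this maximum equals $f(\beta)$ and the SU is served at the full admissible power while the PU's QoS is exactly met; when $\beta\rho<\gamma^{th}_p$, even $b=0$ cannot meet the PU's QoS, and the clipping to $0$ encodes this infeasible (outage) case. Once the monotonicity of $f$ is established, everything else is routine algebra, so I anticipate no further difficulty beyond the reduction through $\beta$ in the third step.
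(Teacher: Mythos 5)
Your proof is correct and follows essentially the same route as the paper: exploit the monotonicity of $\gamma_s$ in $b$, reduce the QoS constraint to the single upper bound $b\leq\left(\beta\rho-\gamma^{th}_p\right)/\left(\left(\gamma^{th}_p+1\right)\beta\rho\right)$, and clip to the admissible range. The only difference is that you explicitly justify the collapse of the two SINR constraints through the monotonicity of $f(x)=\frac{1}{\gamma^{th}_p+1}\bigl(1-\frac{\gamma^{th}_p}{x\rho}\bigr)$ and check that $b<1$ is never active, details the paper states without proof.
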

\begin{proof}
Given antenna indexes $n$, $m$ and $k$, by substituting (\ref{Equ_Rp1})-(\ref{Equ_Rp2}) into (\ref{Equ_problem_c1}), the power coefficient $b$ should satisfy the condition: {\small{$b\!\leqslant\!\frac{\beta\rho-{\gamma}^{th}_p}{\left({\gamma}^{th}_p+1\right)\beta\rho}$}}. Meanwhile, $\gamma_s$ is an increasing function of $b$  as shown in (\ref{Equ_Rs}). In this case, in order to maximize $\gamma_s$,  $b$ should take the maximum value in its range. By noting that $0\leqslant b<1$, we then can express the optimal power allocation coefficient $b$ as in (\ref{Equ_b2}). The proof is completed.
\end{proof}

By substituting (\ref{Equ_b2}) into (\ref{Equ_Rs}) and when $b>0$, we have \begin{small}\begin{eqnarray}\label{Equ_Rs2}
\gamma_s\left(h_{nm},g_{nk}\right)=\frac{\min\left(h_{nm},g_{nk}\right)\rho-{\gamma}^{th}_p}{\left({\gamma}^{th}_p+1\right)\min\left(h_{nm},g_{nk}\right)} g_{nk},
\end{eqnarray}\end{small}otherwise $\gamma_s\!=\!0$. At this point, the joint optimization problem $\mathbf{P}$ is simplified into the joint antenna selection problem. It is straightforward to see that finding the optimal antenna indexes $\{n^*,m^*,k^*\}$ may require an exhaustive search (ES) over all possible antenna combinations with the complexity of \footnote{$\mathcal{O}$ is usually used in the efficiency analysis of algorithms and $q(x)=\mathcal{O}\left(p(x)\right)$ when $\lim\limits_{x\rightarrow\infty}|\frac{q(x)}{p(x)}|=c, 0<c<\infty$.}$\mathcal{O}\left(NMK\right)$. When $N$, $M$ and $K$ become large, the computational burden of ES may become unaffordable. Motivated by this, an computationally efficient joint AS algorithm for MIMO CR-NOMA systems will be developed in the next~subsection.\vspace{-1em}

\subsection{Proposed Subset-based Joint AS (SJ-AS) Scheme}
The aim of SJ-AS algorithm is to decrease the computational complexity by greatly reducing the searching set, while ensuring the QoS of the PU and maximizing the achievable SNR of the SU. Specifically, SJ-AS mainly consists of the following three stages.
\begin{itemize}
  \item{\textbf{Stage 1}}. Build the subset $\mathcal{S}_1=\big\{\left(h^{(n)},g^{(n)}\right),n\in\mathcal{N}\big\}$ to reduce the search space, where $h^{(n)}$ and $g^{(n)}$ are the maximum-value elements in the $n$th row of $\bf{H}$ and $\bf{G}$, respectively. Mathematically, we have
      \begin{small}\begin{eqnarray}\label{Equ_hn_gn_max}
      h^{(n)}&=&\max\left(h_{n1},\cdots,h_{nM}\right),\\
      g^{(n)}&=&\max\left(g_{n1},\cdots,g_{nK}\right).
      \end{eqnarray}\end{small}
  \item{\textbf{Stage 2}}. Build the subset $\mathcal{S}_2$ by selecting the pairs from $\mathcal{S}_1$, in which each pair ensures the target SINR of the PU can be supported and $s_p$ can be subtracted successfully at the SU. That~is,
  \begin{small}\begin{eqnarray}\label{Equ_S2}
  \mathcal{S}_2=\left\{\min\left(\gamma_{p}^{(n)},\gamma_{s\rightarrow p}^{(n)}\right)\geqslant\gamma^{th}_p,~n\in\mathcal{S}_1\right\},
  \end{eqnarray}\end{small}where $\gamma_{p}^{(n)}$ and $\gamma_{s\rightarrow p}^{(n)}$ can be obtained by substituting $h^{(n)}$ and $g^{(n)}$ into (\ref{Equ_Rp1}) and (\ref{Equ_Rp2}), respectively. Specifically, $b^{(n)}$ is given in (\ref{Equ_b2}) with $\beta^{(n)}=\min(h^{(n)},g^{(n)})$.
  \item{\textbf{Stage 3}}. When $|\mathcal{S}_2|>0$, select the antenna triple which can maximize the SNR for the SU, i.e.,
  \begin{small}\begin{eqnarray}\label{Equ_S3}
  \{n^*, m^*, k^*\}=\arg\max\left\{\gamma_s\left(h^{(n)},g^{(n)}\right), n\in\mathcal{S}_2\right\}.
  \end{eqnarray}\end{small}Let $m^*$ and $k^*$ denote the original column indexes of $h^{(n^*)}$ and $g^{(n^*)}$, respectively. That is, the $n^*$th antenna at the BS, and the $m^*$th and $k^*$th antennas at the PU and SU are jointly selected. In contrast, when $|\mathcal{S}_2|=0$, the system suffers from an outage.
\end{itemize}
As mentioned before, the complexity of the ES-based scheme is as high as $\mathcal{O}\left(NMK\right)$. In contrast, the complexity of the proposed SJ-AS scheme is upper bounded by $\mathcal{O}\left(N\left(M+K+2\right)\right)$. For the case $N\!=\!M\!=\!K$, we can find that the complexity of SJ-AS is approximately $\mathcal{O}\left(N^2\right)$, which is an order of magnitude lower than $\mathcal{O}\left(N^3\right)$ of the optimal ES-based scheme.

\section{Performance Evaluation}
In this section, we will analyse the system outage performance achieved by SJ-AS. By using the assumption that channel coefficients are Rayleigh distributed, the cumulative density functions (CDF) and the probability density functions (PDF) of $h^{(n)}$ and $g^{(n)}$ in $\mathcal{S}_1$ can be expressed as in \cite{Ref_ME2},\vspace{-0.5em}
\begin{small}\begin{eqnarray}\label{Equ_hi_CDF_PDF}
F_{h^{(n)}}(x)\!\!\!\!&=&\!\!\!\!\left(1-e^{-\Omega_hx}\right)^M, ~F_{g^{(n)}}(x)\!=\!\left(1-e^{-\Omega_gx}\right)^K,~\\
f_{h^{(n)}}(x)\!\!\!\!&=&\!\!-\sum\nolimits_{m=0}^M(-1)^m\binom{M}{m}m\Omega_he^{-m\Omega_hx},\\
f_{g^{(n)}}(x)\!\!\!\!&=&\!\!\!\!-\sum\nolimits_{k=0}^K(-1)^k\binom{K}{k}k\Omega_ge^{-k\Omega_gx},
\end{eqnarray}\end{small}where {\small$\Omega_h=1/\mathbb{E}\left[h_{im}\right]$}, {\small$\Omega_g=1/\mathbb{E}\left[g_{nk}\right]$}, and {\small$f_{h^{(n)}}(x)$} and {\small$f_{g^{(n)}}(x)$} are expanded based on the binomial theorem.

Let $\mathcal{O}_1$ denote the event $|\mathcal{S}_2|=0$, and $\mathcal{O}_2$ denote the event {\small$\gamma_s^{(n^*)}<{\gamma}^{th}_s$} while $|\mathcal{S}_2|>0$. As in \cite{Ref_outage}, the overall system outage is defined as the event that any user in the system cannot achieve reliable~detection, i.e.,\vspace{-0.2em}
\begin{small}\begin{eqnarray}\label{Equ_OverallPr}
\mathrm{Pr}\left(\mathcal{O}\right)={\mathrm{Pr}\left(\mathcal{O}_1\right)}+{\mathrm{Pr}\left(\mathcal{O}_2\right)}.
\end{eqnarray}\vspace{-0.2em}\end{small}In this case, the asymptotic system outage probability can be obtained according to the following lemma.
\begin{lemma}When the transmit SNR $\rho\rightarrow\infty$, the system outage probability achieved by SJ-AS can be approximated as
\begin{small}\begin{eqnarray}\label{lemma1}
\mathrm{P}(\mathcal{O})\!\!&\approx&\!\!\sum_{\ell=0}^N\binom{N}{\ell}\left(\sum_{m=1}^M\sum_{k=1}^Kc_{m,k}\left(e^{-\varphi_{m,k}{c_1}}-e^{-\phi_{m,k}}\right.\right.\nonumber\\
\!\!\!\!&+&\!\!\!\!\left.\left.\frac{k\Omega_ge^{-\varphi_{m,k}c_2}\left(1-e^{-\varphi_{m,k}c_1}\right)}{\varphi_{m,k}}\right)\right)^{\ell}F_{\beta^{(n)}}\!\!\left(c_1\right)^{N-\ell}.~~~
\end{eqnarray}\end{small}where {\small$c_{m,k}=(-1)^{m+k}\binom{M}{m}\binom{K}{k}$}, {\small$\varphi_{m,k}=m\Omega_h+k\Omega_g$}, {\small$c_1=\frac{{\gamma}^{th}_p}{\rho}$}, {\small$c_2={{\gamma}^{th}_s\left({\gamma}^{th}_p+1\right)}/{\rho}$}, {\small$\phi_{m,k}=m\Omega_hc_1+k\Omega_gc_2$}, and {\small$ F_{\beta^{(n)}}(x)=1-\left(F_{h^{(n)}}(x)-1\right)\left(F_{g^{(n)}}(x)-1\right)$}.
\end{lemma}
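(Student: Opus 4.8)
The plan is to use the independence of the $N$ per-antenna pairs $(h^{(n)},g^{(n)})$ to turn the whole event into a product, so that the outer binomial sum in \eqref{lemma1} is just a binomial expansion. First I would recast the outage event. By Lemma~\ref{Lemma:optimal_b} the allocated power obeys $b^{(n)}>0$ exactly when $\beta^{(n)}=\min(h^{(n)},g^{(n)})>c_1$, and in that regime the PU constraint \eqref{Equ_problem_c1} is met with equality; hence almost surely $n\in\mathcal{S}_2\iff\beta^{(n)}>c_1$, so $\Pr(n\notin\mathcal{S}_2)=F_{\beta^{(n)}}(c_1)$. The system is in outage precisely when no antenna is simultaneously admissible and rate-supporting, i.e.\ $\mathcal{O}=\bigcap_{n}\{\,\mathrm{not}\,(\,n\in\mathcal{S}_2\ \mathrm{and}\ \gamma_s(h^{(n)},g^{(n)})\geq\gamma^{th}_s)\}$, which merges $\mathcal{O}_1$ and $\mathcal{O}_2$ of \eqref{Equ_OverallPr}. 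Since the pairs are i.i.d., this factorises as $\Pr(\mathcal{O})=(F_{\beta^{(n)}}(c_1)+p)^N$, where $p=\Pr(\beta^{(n)}>c_1,\ \gamma_s(h^{(n)},g^{(n)})<\gamma^{th}_s)$. Expanding the $N$th power by the binomial theorem immediately yields the outer $\sum_{\ell=0}^N\binom{N}{\ell}(\cdot)^{\ell}F_{\beta^{(n)}}(c_1)^{N-\ell}$ (the $\ell=0$ term being $\Pr(\mathcal{O}_1)$ and the rest $\Pr(\mathcal{O}_2)$), so it remains only to identify $p$ with the bracketed double sum.

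To compute $p$ I would use that $h^{(n)},g^{(n)}$ are independent with the densities in \eqref{Equ_hi_CDF_PDF}, writing $p$ as the integral of $f_{h^{(n)}}(h)f_{g^{(n)}}(g)$ over $\{\min(h,g)>c_1,\ \gamma_s<\gamma^{th}_s\}$. From \eqref{Equ_Rs2} the condition $\gamma_s<\gamma^{th}_s$ is equivalent to $(\beta-c_1)g<c_2\beta$ with $\beta=\min(h,g)$, which I would resolve by splitting on which channel attains the minimum. On $\{g\leq h\}$ the minimum is $g$ and the condition collapses to the strip $c_1<g<c_1+c_2$, $h\geq g$; inserting the binomial expansions of $f_{g^{(n)}}$ and $1-F_{h^{(n)}}$ makes this integral elementary and generates terms $c_{m,k}e^{-\varphi_{m,k}(\cdot)}$ with $\varphi_{m,k}=m\Omega_h+k\Omega_g$. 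On $\{h<g\}$ the minimum is $h$ and the condition becomes the hyperbolic region $c_1<h<c_1+c_2$, $h<g<c_2 h/(h-c_1)$.

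The hard part will be this second ($h<g$) region: after the shift $u=h-c_1$ its inner integral is of the form $\int_0^{c_2}e^{-m\Omega_h u}\,e^{-k\Omega_g c_1 c_2/u}\,du$, a Bessel-type integral with no elementary antiderivative, which is exactly why the result is asymptotic. I would evaluate it as $\rho\to\infty$, i.e.\ $c_1,c_2\to0$, where the $e^{-k\Omega_g c_1 c_2/u}$ factor governing the hyperbolic boundary can be replaced by the form that integrates in closed form to the order matching the other (exact) terms; this is what produces the $e^{-\phi_{m,k}}$ and $e^{-\varphi_{m,k}c_2}$ exponentials, with $\phi_{m,k}=m\Omega_hc_1+k\Omega_gc_2$. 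Finally I would add the two regional contributions: the pieces carrying $k\Omega_g/\varphi_{m,k}$ (from $g\leq h$) and $m\Omega_h/\varphi_{m,k}$ (from $h<g$) combine through $(m\Omega_h+k\Omega_g)/\varphi_{m,k}=1$, cancelling the spurious exponentials and leaving $p=\sum_{m,k}c_{m,k}\big(e^{-\varphi_{m,k}c_1}-e^{-\phi_{m,k}}+k\Omega_g e^{-\varphi_{m,k}c_2}(1-e^{-\varphi_{m,k}c_1})/\varphi_{m,k}\big)$. Substituting this into the binomial expansion from the first step gives \eqref{lemma1}; as sanity checks I would confirm $\sum_{m,k}c_{m,k}=1$ and that $p\to0$ while $F_{\beta^{(n)}}(c_1)\to0$ as $\rho\to\infty$.
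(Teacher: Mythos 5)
Your proposal is correct, and its computational core is the same as the paper's: the paper likewise splits the per-antenna failure event according to which of $h^{(n)},g^{(n)}$ attains the minimum, evaluates the strip $\{c_1<g^{(n)}<c_1+c_2,\;h^{(n)}\geq g^{(n)}\}$ exactly via binomial expansions (its $Q_1$ in (\ref{Equ_Q1})), and treats the hyperbolic region only asymptotically, by replacing it with $\{c_1<h^{(n)}<g^{(n)}<c_2\}$ as $\rho\to\infty$ (its $Q_2$ in (\ref{Equ_Q2})) --- a step asserted exactly as loosely as your ``replace the Bessel-type factor'' argument, so that vagueness is shared with, not worse than, the paper. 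The genuine difference is the outer decomposition. The paper keeps $\mathrm{Pr}(\mathcal{O})=\mathrm{Pr}(\mathcal{O}_1)+\mathrm{Pr}(\mathcal{O}_2)$ from (\ref{Equ_OverallPr}), computes $\mathrm{Pr}(\mathcal{O}_1)=F_{\beta^{(n)}}(c_1)^N$, and obtains $\mathrm{Pr}(\mathcal{O}_2)$ by conditioning on $|\mathcal{S}_2|=\ell$: it forms the conditional CDF of $\alpha^{(n)}=g^{(n)}b^{(n)}$ given $n\in\mathcal{S}_2$ in (\ref{Equ_F_alpha}), raises it to the $\ell$th power, and weights by $\mathrm{Pr}(|\mathcal{S}_2|=\ell)$ from (\ref{Equ_Sn}); the factors $\left(1-F_{\beta^{(n)}}(c_1)\right)^{\ell}$ then cancel against the denominator of the conditional CDF, leaving $(Q_1+Q_2)^{\ell}$, which is precisely your $p^{\ell}$. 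You bypass all of this by observing that outage occurs iff each of the $N$ i.i.d. pairs $\left(h^{(n)},g^{(n)}\right)$ individually fails to be both admissible and rate-supporting, so that $\mathrm{Pr}(\mathcal{O})=\left(F_{\beta^{(n)}}(c_1)+p\right)^N$ at once, the lemma's $\sum_{\ell=0}^{N}$ being nothing but the binomial expansion of this power. The two routes are algebraically identical, but yours is tidier: it avoids the conditional-CDF machinery and the implicit exchangeability claim that, given $|\mathcal{S}_2|=\ell$, the surviving $\alpha^{(n)}$ are i.i.d., and it exhibits the answer in a compact product form that the stated formula obscures. Your merging identity $m\Omega_h/\varphi_{m,k}+k\Omega_g/\varphi_{m,k}=1$ reproduces the bracketed summand from $Q_1+Q_2$ exactly, so the final expression agrees with (\ref{lemma1}).
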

\begin{proof}
we can first calculate the term $\mathrm{Pr}\left(\mathcal{O}_1\right)$ as \vspace{-0.5em}
\begin{small}\begin{eqnarray}\label{Equ_O1}
\mathrm{Pr}\left(\mathcal{O}_1\right)\!\!\!\!\!&=&\!\!\!\!\!\mathrm{Pr}\left(|\mathcal{S}_2|=0\right)\!=\!\prod_{n=1}^{N}\mathrm{Pr}\left(\min\left(\gamma_{p}^{(n)},\gamma_{s\rightarrow p}^{(n)}\right)<{\gamma}^{th}_p\right)\nonumber\\
\!\!\!\!\!&=&\!\!\!\!\!\prod_{n=1}^{N}\!\!\mathrm{Pr}\!\left(b^{(n)}\leq0\right)\!=\!\prod_{n=1}^{N}\!\!\mathrm{Pr}\!\left(\beta^{(n)}\leq\frac{{\gamma}^{th}_p}{\rho}\right)\nonumber\\
\!\!\!\!\!&=&\!\!\!\!\!\left(F_{\beta^{(n)}}\left(c_1\right)\right)^{N}.~~~
\end{eqnarray}\end{small}where $c_1={{\gamma}^{th}_p}/{\rho}$ and the CDF of $\beta^{(n)}$ is given by \vspace{-0.5em}
\begin{small}\begin{eqnarray}\label{Equ_betaCDF}
F_{\beta^{(n)}}(x)\!\!\!&=&\!\!\!1-\mathrm{Pr}\left(\beta^{(n)}>x\right)\nonumber\\
\!\!\!&=&\!\!\!1-{\mathrm{Pr}\left(h^{(n)}>g^{(n)}> x\right)}-{\mathrm{Pr}\left(g^{(n)}>h^{(n)}> x\right)}\nonumber\\
\!\!\!&=&\!\!\!1-\left(F_{h^{(n)}}(x)-1\right)\left(F_{g^{(n)}}(x)-1\right).
\end{eqnarray}\end{small}By substituting (\ref{Equ_betaCDF}) into (\ref{Equ_O1}), $\mathrm{Pr}\left(\mathcal{O}_1\right)$ is~obtained.

We then turn to the calculation of $\mathrm{Pr}\left(\mathcal{O}_2\right)$,
\begin{small}\begin{eqnarray}\label{Equ_O2_1}
\mathrm{Pr}\left(\mathcal{O}_2\right)&=&\mathrm{Pr}\left(\rho g^{(n^*)}b^{(n^*)}<{\gamma}^{th}_s,~|\mathcal{S}_2|>0\right)\nonumber\\
&=&\mathrm{Pr}\left(g^{(n^*)}b^{(n^*)}<\frac{{\gamma}^{th}_s}{\rho},~|\mathcal{S}_2|>0\right).
\end{eqnarray}\end{small}Let $\alpha^{(n)}=g^{(n)}b^{(n)}$ for $\forall n\in \mathcal{S}_2$. Since $b^{(n)}>0$ for $\forall n\in \mathcal{S}_2$, the product $\alpha^{(n^*)}=g^{(n^*)}b^{(n^*)}$ in (\ref{Equ_O2_1}) can be presented~as\vspace{-0.5em}
\begin{small}\begin{eqnarray}
\alpha^{(n^*)}=\max\left(\alpha^{(n)}\right),~~\mathrm{for}~~\forall n\in \mathcal{S}_2.
\end{eqnarray}\end{small}Then $\mathrm{Pr}\left(\mathcal{O}_2\right)$ can be further expressed as\vspace{-0.5em}
\begin{small}\begin{eqnarray}\label{Equ_O2_2}
\mathrm{Pr}\left(\mathcal{O}_2\right)\!\!\!&=&\!\!\!\mathrm{Pr}\left(\alpha^{(n^*)}<\frac{{\gamma}^{th}_s}{\rho},~|\mathcal{S}_2|>0\right)\nonumber\\
\!\!\!&=&\!\!\!\sum_{\ell=1}^{N}\mathrm{Pr}\left(\alpha^{(n^*)}<\frac{{\gamma}^{th}_s}{\rho}\mid|\mathcal{S}_2|=\ell\right)\mathrm{Pr}\left(|\mathcal{S}_2|=\ell\right)\nonumber\\
\!\!\!&=&\!\!\!\sum_{\ell=1}^{N}\left(\mathrm{Pr}\left(\alpha^{(n)}<\frac{{\gamma}^{th}_s}{\rho}\mid|\mathcal{S}_2|=\ell\right)\right)^{\ell}\mathrm{Pr}\left(|\mathcal{S}_2|=\ell\right)\nonumber\\
\!\!\!&=&\!\!\!\sum_{\ell=1}^{N}\left(F_{\alpha^{(n)}}\left(\frac{{\gamma}^{th}_s}{\rho}\right)\right)^{\ell}\mathrm{Pr}\left(|\mathcal{S}_2|=\ell\right),
\end{eqnarray}\end{small}in which,\vspace{-1em}
\begin{small}\begin{eqnarray}\label{Equ_F_alpha}
F_{\alpha^{(n)}}(\frac{{\gamma}^{th}_s}{\rho})\!\!&=&\!\!\mathrm{Pr}\left(\frac{g^{(n)}\left(\beta^{(n)}-\frac{{\gamma}^{th}_p}{\rho}\right)}{({\gamma}^{th}_p+1)\beta^{(n)}}<\frac{{\gamma}^{th}_s}{\rho}\mid n\in\mathcal{S}_2, |\mathcal{S}_2|>0\right)\nonumber\\
\!\!&=&\!\!\frac{{\mathrm{Pr}\left(\frac{g^{(n)}\rho-{{\gamma}^{th}_p}}{{\gamma}^{th}_p+1}<{\gamma}^{th}_s,h^{(n)}\geqslant g^{(n)}>\frac{{\gamma}^{th}_p}{\rho}\right)}}{\mathrm{Pr}\left(\beta^{(n)}>\frac{{\gamma}^{th}_p}{\rho}\right)}\nonumber\\
\!\!&+&\!\!\frac{{\mathrm{Pr}\left(\frac{g^{(n)}\left(h^{(n)}\rho-{{\gamma}^{th}_p}\right)}{({\gamma}^{th}_p+1)h^{(n)}}<{\gamma}^{th}_s,\frac{{\gamma}^{th}_p}{\rho}<h^{(n)}<g^{(n)}\right)}}{\mathrm{Pr}\left(\beta^{(n)}>\frac{{\gamma}^{th}_p}{\rho}\right)}\nonumber\\
\!\!&=&\!\!\left(Q_1+Q_2\right)/\left(1-F_{\beta^{(n)}}(c_1)\right),
\end{eqnarray}\end{small}Where\vspace{-1em}
\begin{small}\begin{eqnarray}
Q_1\!\!&=&\!\!\mathrm{Pr}\left(\frac{{\gamma}^{th}_p}{\rho}<g^{(n)}<\frac{{\gamma}^{th}_s\left({\gamma}^{th}_p+1\right)+{\gamma}^{th}_p}{\rho}, h^{(n)}\geqslant g^{(n)}\right),~~~\\
Q_2\!\!&=&\!\!\mathrm{Pr}\left(\frac{{\gamma}^{th}_p}{\rho}<h^{(n)}<g^{(n)}<\frac{{\gamma}^{th}_s({\gamma}^{th}_p+1)h^{(n)}}{h^{(n)}\rho-{\gamma}^{th}_p}\right).
\end{eqnarray}\end{small}Let {\small$c_2={{\gamma}^{th}_s\left({\gamma}^{th}_p+1\right)}/{\rho}$}, {\small$c_{m,k}=(-1)^{m+k}\binom{M}{m}\binom{K}{k}$}, and {\small$\varphi_{m,k}=m\Omega_h+k\Omega_g$}, then $Q_1$ can be obtained as follows,
\begin{small}\begin{eqnarray}\label{Equ_Q1}
Q_1\!\!&=&\!\!\mathrm{Pr}\left(c_1<g^{(n)}<c_1+c_2, h^{(n)}\geqslant g^{(n)}\right)\nonumber\\
\!\!&=&\!\!\sum_{m=1}^M\sum_{k=1}^Kc_{m,k}\frac{k\Omega_ge^{-\varphi_{m,k}{c_1}}\left(1-e^{-\varphi_{m,k}{c_2}}\right)}{\varphi_{m,k}}.
\end{eqnarray}\end{small}Similarly, when $\rho\rightarrow\infty$, $Q_2$ can be approximated as\vspace{-0.5em}
\begin{small}\begin{eqnarray}\label{Equ_Q2}
Q_2\!\!\!\!&\approx&\!\!\!\!\mathrm{Pr}\left(c_1<h^{(n)}<g^{(n)}<c_2\right)\nonumber\\
\!\!\!\!&=&\!\!\!\!\sum_{m=1}^M\!\sum_{k=1}^K\!c_{m,k}\!\left(\!\frac{m\Omega_he^{-\varphi_{m,k}{c_1}}\!+\!k\Omega_ge^{-\varphi_{m,k}c_2}}{\varphi_{m,k}}\!-\!e^{-\phi_{m,k}}\!\right),~~~~
\end{eqnarray}\end{small}where $\phi_{m,k}=m\Omega_hc_1+k\Omega_gc_2$.

On the other hand, the probability that $|\mathcal{S}_2|=\ell$ can be calculated as below,\vspace{-1em}
\begin{small}\begin{eqnarray}\label{Equ_Sn}
\mathrm{Pr}\left(\mid\mathcal{S}_2\mid=\ell\right)\!\!\!\!&=&\!\!\!\!\binom{N}{\ell}\left(F_{\beta^{(n)}}\!\left(c_1\right)\right)^{N-\ell}\left(1\!\!-\!\!F_{\beta^{(n)}}\left(c_1\right)\right)\!^{\ell}.~~~~
\end{eqnarray}\end{small}
By combining (\ref{Equ_O2_2})-(\ref{Equ_Sn}) and applying some algebraic manipulations, $\mathrm{P}(\mathcal{O}_2)$ can be expressed as
\begin{small}\begin{eqnarray}\label{Equ_O2_3}
\mathrm{P}(\mathcal{O}_2)\!\!&\approx&\!\!\sum_{\ell=1}^N\binom{N}{\ell}\left(\sum_{m=1}^M\sum_{k=1}^Kc_{m,k}\left(e^{-\varphi_{m,k}{c_1}}-e^{-\phi_{m,k}}\right.\right.\nonumber\\
\!\!\!\!&+&\!\!\!\!\left.\left.\frac{k\Omega_ge^{-\varphi_{m,k}c_2}\left(1-e^{-\varphi_{m,k}c_1}\right)}{\varphi_{m,k}}\right)\right)^{\ell}F_{\beta^{(n)}}\!\!\left(c_1\right)^{N-\ell}.~~~
\end{eqnarray}\end{small}By summing (\ref{Equ_O1}) and (\ref{Equ_O2_3}), the proof of (\ref{lemma1}) is completed.
\end{proof}

\textit{Remark 1}: When $\rho$ approaches infinity, $c_1$, $c_2$ and $\phi_{m,k}$ approach zero. By using the binomial theorem and the approximation {\small$1\!-\!e^{-x}\!\overset{x\rightarrow0}{\approx}\!x$}, the system outage probability can be further approximated as follows:
\begin{small}\begin{eqnarray}\label{Equ_O_app}
\mathrm{P}(\mathcal{O})\!\!&\approx&\!\!\left(1-\left(1-F_{h^{(n)}}(c_1)\right)\left(1-F_{g^{(n)}}(c_1)\right)\right)^N\nonumber\\
&\approx&{\zeta^N}/{\rho^{N\min(M,K)}},
\end{eqnarray}\end{small}where {\small$\zeta\!\!=\!\!\frac{{\tilde{c}_1}^M}{\rho^{M-\min(M,K)}}\!\!+\!\!\frac{{\tilde{c}_2}^K}{\rho^{K-\min(M,K)}}\!\!-\!\!\frac{{\tilde{c}_1}^M{\tilde{c}_2}^K}{\rho^{M+K-\min(M,K)}}$, $\tilde{c}_1=\Omega_h{\gamma}^{th}_p$}, and {\small$\tilde{c}_2=\Omega_g{\gamma}^{th}_p$}. From (\ref{Equ_O_app}), we can see that the SJ-AS scheme can realize a diversity of $N\min\left(M,K\right)$.

\textit{Remark 2}: The optimality of the proposed SJ-AS is illustrated in the following lemma.

\begin{lemma}The proposed SJ-AS scheme minimizes the system outage probability of the considered MIMO CR-NOMA system.
\end{lemma}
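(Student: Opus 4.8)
The plan is to prove optimality through a componentwise dominance argument resting on a monotonicity property of the objective. First I would establish the key lemma that the function $\gamma_s(h_{nm},g_{nk})$ in (\ref{Equ_Rs2}) is non-decreasing in each of its two arguments over the feasible region $\{\min(h_{nm},g_{nk})\geqslant c_1\}$, where $c_1=\gamma^{th}_p/\rho$. Because of the $\min(\cdot)$ term, $\gamma_s$ is piecewise, so I would split into two regimes. When $h_{nm}\leqslant g_{nk}$, substitution gives $\gamma_s=\big(\rho-\gamma^{th}_p/h_{nm}\big)g_{nk}/(\gamma^{th}_p+1)$, which is increasing in both $h_{nm}$ and $g_{nk}$. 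When $h_{nm}>g_{nk}$, we get $\gamma_s=\big(g_{nk}\rho-\gamma^{th}_p\big)/(\gamma^{th}_p+1)$, constant in $h_{nm}$ and increasing in $g_{nk}$. In the feasible region the coefficient $(h_{nm}\rho-\gamma^{th}_p)/\big((\gamma^{th}_p+1)h_{nm}\big)$ is nonnegative, so, together with continuity at the boundary $h_{nm}=g_{nk}$, the function is jointly non-decreasing in $(h_{nm},g_{nk})$.

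Second, I would use this lemma to show that restricting the search to the per-row maxima in $\mathcal{S}_1$ incurs no loss of optimality. Since the BS, PU, and SU antennas are chosen independently, for each BS antenna $n$ the pair $(h^{(n)},g^{(n)})$ defined in (\ref{Equ_hn_gn_max}) dominates every in-row pair componentwise, i.e.\ $h^{(n)}\geqslant h_{nm}$ and $g^{(n)}\geqslant g_{nk}$ for all $m,k$. By the monotonicity lemma, this single pair simultaneously makes the QoS constraint easiest to satisfy, because $\min(h^{(n)},g^{(n)})\geqslant\min(h_{nm},g_{nk})$, and maximizes $\gamma_s$ within that row. Hence, for each BS antenna, the row-max pair is the only candidate an outage-optimal scheme would ever select, which justifies Stages~1 and~2.

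Third, I would close the argument at the outage level. Suppose that for a given channel realization some joint AS scheme (equivalently, exhaustive search over all triples) avoids outage; then there exists a triple $(n,m,k)$ with $\min(h_{nm},g_{nk})\geqslant c_1$ and $\gamma_s(h_{nm},g_{nk})\geqslant\gamma^{th}_s$. Componentwise dominance yields $\min(h^{(n)},g^{(n)})\geqslant c_1$, so $n\in\mathcal{S}_2$ and $|\mathcal{S}_2|>0$, and also $\gamma_s(h^{(n)},g^{(n)})\geqslant\gamma_s(h_{nm},g_{nk})\geqslant\gamma^{th}_s$. Since Stage~3 picks $n^*$ maximizing $\gamma_s$ over $\mathcal{S}_2$, we obtain $\gamma_s^{(n^*)}\geqslant\gamma^{th}_s$, so SJ-AS also avoids outage. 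The converse is immediate, as SJ-AS is itself a feasible scheme. Therefore the outage event of SJ-AS coincides pointwise with the smallest achievable outage event, and taking probabilities gives the claimed minimality.

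I expect the main obstacle to be the careful verification of the monotonicity lemma, since the $\min$ operator makes $\gamma_s$ piecewise and one must confirm \emph{joint} (not merely separate) non-decreasingness and check the behaviour at the switching boundary $h_{nm}=g_{nk}$; this is exactly the property that lets componentwise dominance translate into dominance of feasibility and objective at once.
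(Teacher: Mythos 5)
Your proof is correct, and while it shares the paper's high-level skeleton (pointwise inclusion of outage events: whenever any scheme avoids outage, SJ-AS avoids outage), it supplies a key lemma that the paper's own proof omits. The paper argues by contradiction: if a rival scheme with triple $(\hat{n}^*,\hat{m}^*,\hat{k}^*)$ avoids outage while SJ-AS suffers one, then the rival pair ``must be in $\mathcal{S}_2$'' so $|\mathcal{S}_2|>0$, and since SJ-AS maximizes $\gamma_s$, the rival's smaller $\gamma_s$ cannot meet $\gamma_s^{th}$ either. Both steps are asserted rather than proved, and neither is literally true as stated: $\mathcal{S}_2$ is built only from the row-maximum pairs $\left(h^{(n)},g^{(n)}\right)$ of $\mathcal{S}_1$, so an arbitrary pair $\left(h_{\hat{n}^*\hat{m}^*},g_{\hat{n}^*\hat{k}^*}\right)$ need not belong to it, and Stage 3 maximizes $\gamma_s$ only over $\mathcal{S}_2$, not over all $NMK$ triples, so the claim that every other selection yields a smaller $\gamma_s$ requires justification. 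Your monotonicity lemma --- that $\gamma_s$ in (\ref{Equ_Rs2}) is jointly non-decreasing on the feasible region, verified on both pieces of the $\min(\cdot)$ and at the switching boundary $h_{nm}=g_{nk}$ --- is precisely what closes both gaps: componentwise dominance gives $\min\left(h^{(\hat{n}^*)},g^{(\hat{n}^*)}\right)\geqslant\min\left(h_{\hat{n}^*\hat{m}^*},g_{\hat{n}^*\hat{k}^*}\right)$, so feasibility of the rival triple transfers to $\hat{n}^*\in\mathcal{S}_2$, and it gives $\gamma_s^{(n^*)}\geqslant\gamma_s\left(h^{(\hat{n}^*)},g^{(\hat{n}^*)}\right)\geqslant\gamma_s\left(h_{\hat{n}^*\hat{m}^*},g_{\hat{n}^*\hat{k}^*}\right)$, so the objective transfers as well; minimality of the outage probability then follows by taking probabilities of the nested events. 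In short, the paper buys brevity at the cost of rigor, whereas your direct dominance argument proves that the restriction to the subset $\mathcal{S}_1$ is lossless --- which is the real mathematical content of the lemma and the step an exam-grade proof cannot skip.
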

\begin{proof}
This lemma can be proved by contradiction. Suppose there exists another joint AS strategy achieving a lower system outage probability than SJ-AS. Let {\small$\left(\hat{n}^*, \hat{m}^*,\hat{k}^*\right)\neq\left({n}^*, {m}^*,{k}^*\right)$} denote the antenna triple selected by the new strategy. According to the assumption, it is possible that there is no outage with {\small$\left(\hat{n}^*, \hat{m}^*,\hat{k}^*\right)$} antennas while an outage occurs with {\small$\left({n}^*, {m}^*,{k}^*\right)$} antennas. In this case, the pair {\small$\left(h_{\hat{n}^*\hat{m}^*},g_{\hat{n}^*\hat{k}^*}\right)$} must be in {\small$|\mathcal{S}_2|$} to satisfy the target SINR of the PU, i.e., {\small$|\mathcal{S}_2|>0$}. Recall that the pair {\small$\left(h_{{n}^*{m}^*},g_{{n}^*{k}^*}\right)\in|\mathcal{S}_2|$} is selected according to (\ref{Equ_S3})  to maximize {\small${\gamma}_s$}. In this case, if the maximized {\small$\gamma_s$} cannot meet {\small$\gamma_s^{th}$}, the antennas selected by other scheme which provides smaller {\small${\gamma}_s$} cannot meet {\small$\gamma_s^{th}$}, either. Therefore, one can conclude that there is NO other joint AS strategies can achieve a lower outage probability than SJ-AS, which is contradicted to the assumption made earlier. The lemma is~proved.
\end{proof}\vspace{-0.5em}

\section{Numerical Studies}
In this section, the performance of the proposed SJ-AS algorithm for MIMO CR-NOMA networks is evaluated by Monte Carlo simulations. Let $\Omega_h=d_p^\varepsilon$ ($\Omega_g=d_s^\varepsilon$), where $d_p$ ($d_s$) is the distance between the BS and PU (SU), and the path-loss exponent is set as $\varepsilon=3$.

Fig. (a) and (b) compare the received SNR of the SU and the system outage performance between SJ-AS and other AS strategies. As illustrated in both figures, over the entire SNR region, SJ-AS outperforms the conventional max-min scheme, in which the antenna selection is executed under the max-min criteria, i.e., {\small$\max(\min(h_{nm}, g_{nk}))$ for all $n\in\mathcal{N}, m\in\mathcal{M}$ and $k\in\mathcal{K}$}. Furthermore, the performance of both SJ-AS and the max-min scheme are much better than that of random AS, since both SJ-AS and the max-min scheme utilize the spatial degrees of freedom brought by the multiple antennas at each node. We also see that the analytical results match the simulation results for SJ-AS, which validates our theoretical analysis in Sec. III. Moreover, compared to the optimal ES scheme, SJ-AS can achieve the optimal outage performance as discussed in Remark 2, but with significantly reduced computational complexity. {In particular, the corresponding average power allocation coefficient $b$ for each scheme is illustrate in Table. I. Again we can find that the SJ-AS can achieve the same power allocation of the optimal ES scheme.}

\begin{figure}
\begin{minipage}[t]{0.5\linewidth}
\centering
\subfigure[Rx SNR vs. transmit power]{
\includegraphics[width=1.9in]{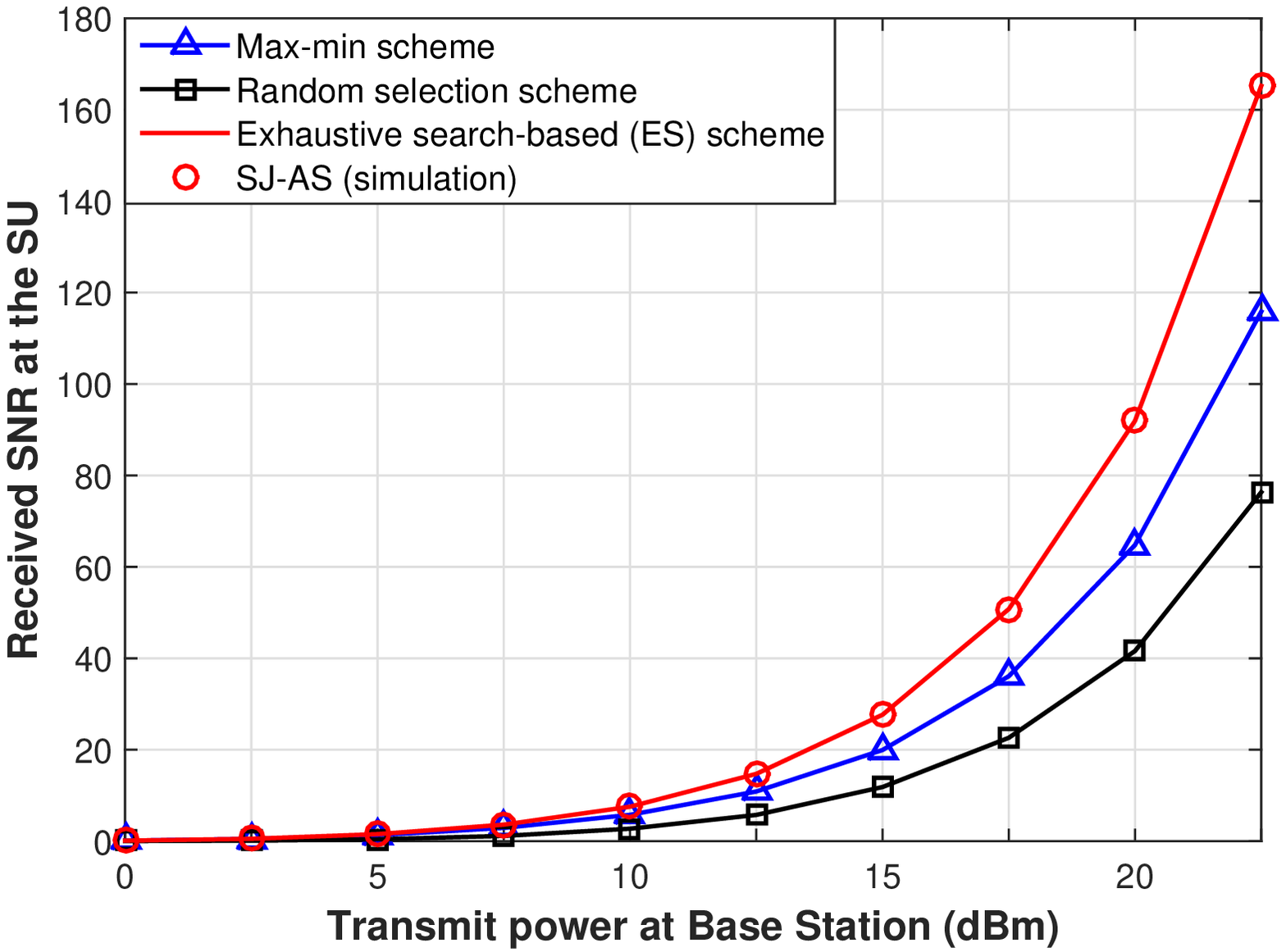}
\label{fig:side:a}}
\end{minipage}%
\begin{minipage}[t]{0.5\linewidth}
\centering
\subfigure[Outage probability vs. transmit power]{
\includegraphics[width=1.9in]{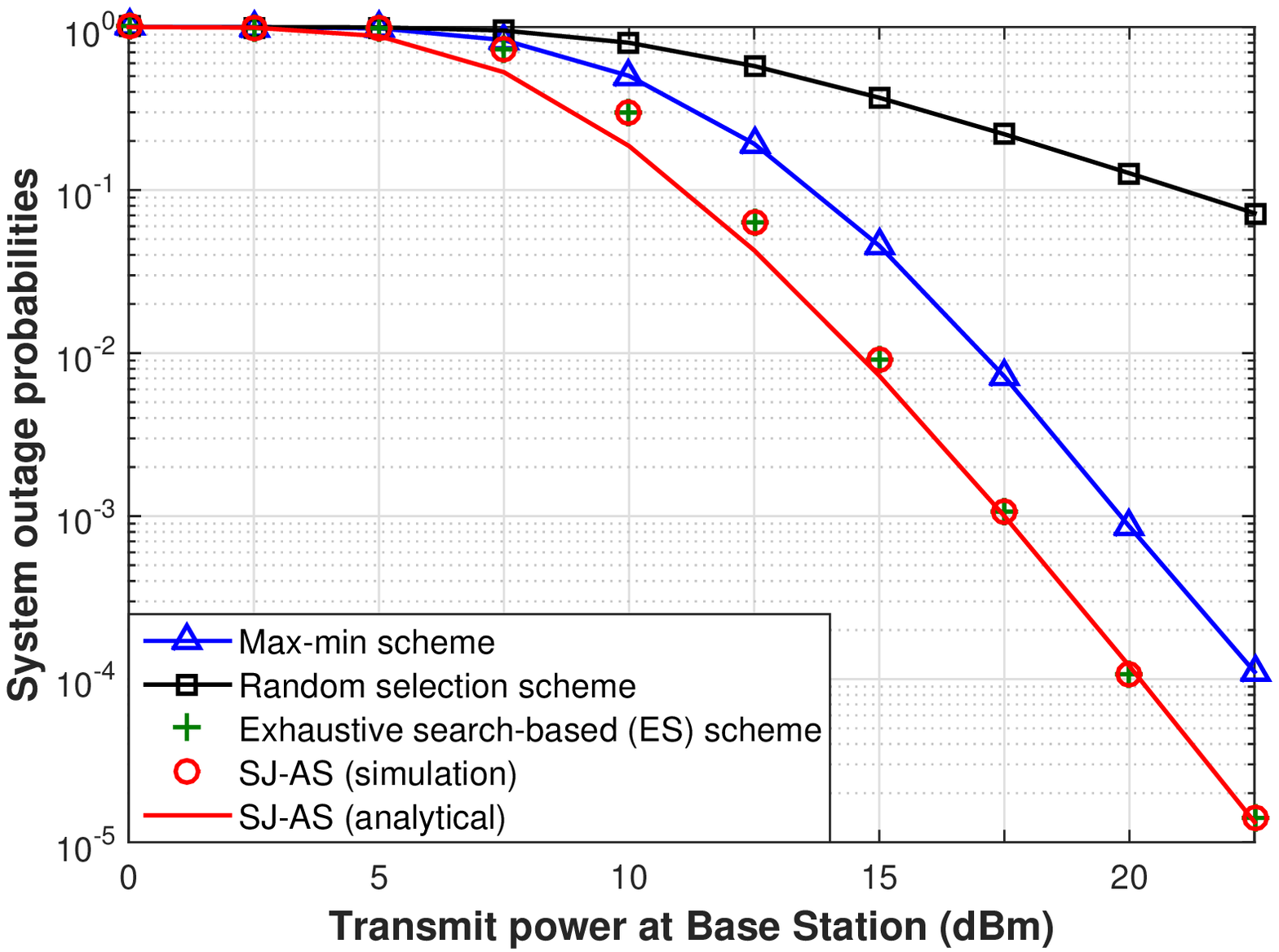}
\label{fig:side:b}}
\end{minipage}
\caption{{\small{Comparison of AS schemes, $N\!=\!M\!=\!K\!=\!2$, $d_p\!=\!350$m, $d_s\!=\!250$m, $\sigma\!=\!-70$dBm, $\gamma_p\!=\!2^{0.5}-1$ and $\gamma_s\!=\!2^{2.5}-1$}. }}
\end{figure}

\begin{footnotesize}
\begin{table}
\centering
\caption{\small{Average power allocation coefficient $b$.}}
\begin{tabular}{l*{4}{c}c}
Transmit power $P$          & $0$dBm &$5$dBm &$10$dBm & $15$dBm &$20$dBm \\
\hline
Random AS & 0.0155 & 0.1491 & 0.3715 & 0.5425 & 0.6359\\
Max-min AS& 0.1441&0.5006&0.6417&0.6864&0.7006 \\
ES AS         & 0.1418&0.4624 &0.5997&0.6641&0.6915\\
SJ-AS     &0.1418&0.4624 &0.5997&0.6641&0.6915\\
\end{tabular}\vspace{-2em}
\end{table}
\end{footnotesize}\vspace{-0.5em}

\section{Conclusion}
In this letter, we studied the joint AS and power allocation problem for a MIMO CR-NOMA system. A computationally efficient SJ-AS scheme was proposed, and the asymptotic closed-form expression for the system outage performance and the diversity order for SJ-AS were both obtained. Numerical results demonstrated that SJ-AS can outperform both the conventional max-min approach and the random selection scheme, and can achieve the optimal performance of the ES~algorithm.\vspace{-0.5em}
\ifCLASSOPTIONcaptionsoff
  \newpage
\fi


\begin{thebibliography}{1}

\bibitem{Ref_noma_5G}
L. Dai, B. Wang, Y. Yuan, S. Han, I. Chin-Lin, and Z. Wang, ``Non-orthogonal multiple access for 5G: solutions, challenges, opportunities, and future research trends", \emph{IEEE Commun. Mag.}, vol. 53, pp. 74-81, Sep. 2015.

\bibitem{Ref_cognitive_radio}
S. Haykin. ``Cognitive radio: brain-empowered wireless communications",  \emph{IEEE journal on selected areas in communications}, vol. 23, no. 2, pp. 201-220, Feb. 2005.

\bibitem{Ref_CR_NOMA}
Z. Ding, P. Fan, and H. V. Poor, ``Impact of user pairing on 5G non-orthogonal multiple access downlink transmissions", \emph{IEEE Trans. on Veh. Technol.}, vol. 65, no. 8, Aug. 2016.

\bibitem{Ref_CR_NOMA_MIMO}
Z. Ding, R. Schober, and H. V. Poor. ``A general MIMO framework for NOMA downlink and uplink transmission based on signal alignment", \emph{IEEE Trans. on Wireless Commun.}, vol. 15, no. 6, pp: 4438-4454, Mar. 2016.

\bibitem{Ref_ME2}
Y.~Yu, H. Chen, Y. Li, Z. Ding, and B. Vucetic, ``Antenna selection for MIMO non-orthogonal multiple access systems", arXiv: 1612.04943 (2016)

\bibitem{Ref_outage2}
H. Kong, and Asaduzzaman,  ``On the outage behaviour of interference temperature limited CR-MISO channel", \emph{IEEE Journal of Commun. and Networks}, vol. 13, no. 5, pp. 456-462, 2011.

\bibitem{Ref_one_RF1}
E. Erdogan, A. Afana, S. Ikki, and H. Yanikomeroglu, ``Antenna selection in MIMO cognitive AF relay networks with mutual interference and limited feedback¡±, \emph{IEEE Commun. Lett.}, vol. 21, no.5, May 2017.

\bibitem{Ref_power_allocation}
M. Zeng, G. I. Tsiropoulos, O. A. Dobre, and M. H. Ahmed, ``Power allocation for cognitive radio networks employing non-orthogonal multiple access," in \emph{Proc. IEEE Global Commun. Conf.}, Dec. 2016.

\bibitem{Ref_outage}
Z. Ding, M. Peng, and H. V. Poor, ``Cooperative non-orthogonal multiple access in 5G systems", \emph{IEEE Commun. Lett.}, vol. 19, no. 8, Aug. 2015.

\end{thebibliography}
\end{document}